\documentclass[11pt]{article}
\usepackage{defs}

\title{A Tight Cycle-Cover Inequality for Shortest Common Superstring}
\author{
   Nikolai Chukhin
   \thanks{JetBrains Research.
   Email: \url{buyolitsez1951@gmail.com}}
   \and
   Alexander S. Kulikov
   \thanks{JetBrains Research.
   Email: \url{alexander.s.kulikov@gmail.com}}
   \and
   Ivan Mihajlin
   \thanks{JetBrains Research.
   Email: \url{ivmihajlin@gmail.com}}
   \and
   Alexander Smal
   \thanks{JetBrains Research.
   Email: \url{avsmal@gmail.com}}
}
\date{}

\begin{document}
\maketitle
\begin{abstract}
	In~the Shortest Common Superstring problem (SCS), one is~given a~finite set of~strings and is~asked to~find a~shortest string containing every input string as~a~substring.
	Its best known approximation ratio is~$2.466$, whereas
	the currently strongest upper bound on~the approximation guarantee 
	of~the maximum-overlap greedy algorithm is~$3.396$ 
	(Englert, Matsakis, and Vesel{\'y}, 2023),
	though it~is conjectured to~be~2.
	We~improve both approximation guarantees: 
	SCS admits a~$\frac{7}{3}$-approximation
	and 
	the approximation guarantee of~the greedy algorithm is~at~most~$3$.
	
	The main technical ingredient of~our proof 
	is~a~certain inequality for optimum cycle covers 
	of~an~overlap graph associated with the input strings.
	Every previous improvement of~greedy's worst-case guarantee and the two recent record guarantees for general SCS are~driven by~it.
	We~improve this inequality by~pushing~it to~its limit:
	for a~particular coefficient of~this inequality,
	we~show a~new upper bound and prove that 
	it~cannot be~improved further.
\end{abstract}

\section{Approximating Shortest Common Superstring}
\label{sec:introduction}

In~the Shortest Common Superstring problem (SCS), one is~given a~finite set of~strings and is~asked to~find a~shortest string containing every input string as~a~substring.
The problem is~a~classical model of~assembling highly overlapping data and appears in~data compression and genome assembly~\cite{GP14}.
It~is~\NP-hard already for strings of~length three~\cite{GMS80}, 
and, for this reason, polynomial-time approximation algorithms are actively studied.
The first constant-factor algorithm, with approximation ratio~$3$, was given in~\cite{BJLTY91}.
Subsequent work improved this ratio through increasingly delicate combinations of~cycle covers and maximum asymmetric traveling salesperson algorithms.
The currently best known approximation ratio is~$2.466$~\cite{EMV23}.
Interestingly, there~is a~simple greedy algorithm whose conjectured \cite{TU88,Storer88,Turner89,BJLTY91} 
approximation ratio is~equal to~2:
while more than one string remains, choose an~ordered pair with maximum suffix--prefix overlap, merge the pair, and iterate.
This algorithm admits a~linear-time implementation over a~constant-size alphabet~\cite{Ukkonen90} and is~known to~behave well in~practice~\cite{RBT04,Ma09,SVB23}.
Whereas it~is not difficult to~show that its approximation ratio is~at~least~2,
the best known upper bound is~$3.396$~\cite{EMV23}.

\subsection*{Our Contribution}
We~improve both approximation guarantees: 
SCS admits a~$\frac{7}{3}$-approximation
and 
the approximation guarantee of~the greedy algorithm is~at~most~$3$.
In~\Cref{figure:timeline}, we~show a~timeline
of~all known upper bounds on~the approximation ratio of~SCS
and the greedy algorithm.

\begin{figure}
	\begin{center}
		\begin{tikzpicture}[xscale=.39, yscale=2.2]
			\tikzstyle{l} = [gray!20, thin]
			\tikzstyle{d} = [circle, inner sep=0mm, minimum size=.7mm, fill=black]
			
			\foreach \year in {1990, ..., 2027}
			\draw[l] (\year - 1990, 2) -- (\year - 1990, 4);
			\foreach \year in {1995, 2000, ..., 2025}
			\node[above] at (\year-1990, 4) {\year};
			\foreach \ratio in {4, 3, 2} {
				\draw[l] (0, \ratio) -- (37, \ratio);
				\node[left] at (0, \ratio) {\ratio};
			}
			
			\foreach \year/\ratio/\ref [count=\n] in {
				1991/3.000/BJLTY91,
				1993/2.889/TY93,
				1994/2.75/{CGPR94, KPS94, AS94},
				1995/2.725/AS95,
				1996/2.667/AS96,
				1997/2.596/BJJ97,
				1999/2.5/Sweedyk99,
				2003/2.5/KLSS03,
				2005/2.5/KS05,
				2012/2.5/PEZ12,
				2013/2.479/Mucha13,
				2022/2.475/EMV22,
				2023/2.466/EMV23} {
				\node[d] (\n) at (\year-1990, \ratio) {};
				\node[below] at (\year-1990, 2) {\small \rotatebox{90}{\cite{\ref}}};
			}
			
			\node[d] at (1994-1990, 2.794) {};
			\node[d] at (1994-1990, 2.834) {};
			
			\foreach \year/\ratio/\ref [count=\n] in {
				1991/4.000/BJLTY91,
				2005/3.5/KS05,
				2022/3.425/EMV22,
				2023/3.396/EMV23} {
				\node[d] (g\n) at (\year-1990, \ratio) {};
			}

			\node[d] (14) at (2026-1990, 2.333) {};
			\node[d] (g5) at (2026-1990, 3) {};
			\node[below] at (2026-1990, 2) {\small \rotatebox{90}{this paper}};
			
			\foreach \n in {1,...,13} {
				\tikzmath{\m=int(\n+1);}
				\draw (\n) -| (\m);
			}
			\foreach \n in {1,...,4} {
				\tikzmath{\m=int(\n+1);}
				\draw (g\n) -| (g\m);
			}
			
			\node[right] (greedy) at (2027-1990+.2, 3) {greedy}; 
			\draw (g5) -- (greedy);
			\node[right] (scs) at (2027-1990+.2, 2.333) {SCS};
			\draw (14) -- (scs);
		\end{tikzpicture}
		\vspace{-7mm}
	\end{center}
	\caption{Timeline of~published approximation ratios for SCS (bottom curve) and 
		the greedy algorithm for SCS (top curve).}
	\label{figure:timeline}
\end{figure}



\section{Known Technical Results}
We~start by~listing important known results on~SCS; we introduce
related concepts along the way.
Gray boxes are used to~visualize these concepts and facts; 
they are intended to~help the reader and may~be safely skipped.

In~the rest of~the paper, $\mathcal{S}=\{s_1, \dotsc, s_n\}$ is
an~input to~SCS; we~assume that no~string in~$\mathcal{S}$
is a~substring of~another string from~$\mathcal{S}$
(one can get rid of~such cases on~the preprocessing stage).
By~$\operatorname{OPT}$ we~denote an~optimal common 
superstring of~$\mathcal{S}$ or~its length. 

\subsection{Prefixes and Overlaps}
For nonempty strings~$s$ and~$t$, 
their \emph{overlap} $\operatorname{ov}(s, t)$ 
is~the longest (possibly empty) string~$q$ such that $s=p \circ q$ and $t=q \circ r$,
for some \emph{nonempty} strings $p$~and~$r$ that
are in~turn denoted by~$\operatorname{pref}(s,t)$ and 
$\operatorname{suff}(s, t)$. For example,
$\operatorname{ov}({\tt ababa}, {\tt babc})={\tt ba}$
and 
$\operatorname{ov}({\tt ababa}, {\tt ababa})={\tt aba}$.
By~$\langle s, t \rangle$ we~denote the \emph{overlap merge} of~$s$ and~$t$:
\[
	\langle s, t \rangle=
	s \circ \operatorname{suff}(s, t)=
	\operatorname{pref}(s, t) \circ t = 
	\operatorname{pref}(s, t) \circ \operatorname{ov}(s, t) \circ \operatorname{suff}(s, t).
\]
Although the~overlap and the~prefix are defined as strings, we use these two words to denote the~lengths of the~corresponding strings.
The notion of~the overlap merge extends to~more than two strings 
in~a~natural way: 
for $t_1, \dotsc, t_k$, 
\[
	\langle t_1, \dotsc, t_k \rangle=
	\operatorname{pref}(t_1, t_2) 
	\circ 
	\operatorname{pref}(t_2, t_3)
	\circ
	\dotsb
	\circ
	\operatorname{pref}(t_{k-1}, t_k)
	\circ
	t_k.
\]
Clearly, 
\[
	|\langle t_1, \dotsc, t_k \rangle| = 
	\sum_{i \in [k-1]}|\!\operatorname{pref}(t_i, t_{i+1})|+|t_k|=
	\sum_{i \in [k]}|t_i| - \sum_{i \in [k-1]}|\!\operatorname{ov}(t_i, t_{i+1})|.
\]
Hence, the smaller the sum of~adjacent prefixes, the larger
the sum of~adjacent overlaps.

%
%
%

\subsection{Permutations and Related Graphs}
The goal in~the SCS problem is~to find a~permutation of~$\mathcal{S}$ 
whose overlap merge length is~minimum. This~is equivalent to~maximizing 
the total overlap or~minimizing the total prefix (up~to
the length of~the last string in the permutation).
The greedy algorithm tries to~construct a~permutation with large total overlap by~taking, at~every step, the largest available overlap.

\begin{graybox}
	For a~set of~strings 
	\(\{{\tt abcdab}, {\tt cdabc}, {\tt dabcd}, {\tt ababa}, {\tt aaa}, {\tt babab}\}\),
	a~naive superstring results from concatenating them and has length  
	$6+5+5+5+3+5=29$:
	\({\tt abcdabcdabcdabcdababaaaababab}\).
	An~optimal superstring has length~$14$:
	\texttt{aaabababcdabcd}. 
	It~corresponds to~a~permutation 
	\((\texttt{aaa}
	,
	\texttt{ababa}
	,
	\texttt{babab}
	,
	\texttt{abcdab}
	,
	\texttt{cdabc}
	,
	\texttt{dabcd})\).
	Its total overlap is~$1+4+2+4+4=15$, exactly the number of~symbols saved relative to~the naive concatenation of~length~$29$.

	The largest overlap is~$4$; therefore, the greedy algorithm may start by~merging $\langle \texttt{dabcd}, \texttt{abcdab}, \texttt{cdabc} \rangle$, after which it~may merge $\langle \texttt{babab}, \texttt{ababa}\rangle$.
	Afterwards, there are no~overlaps larger than~$1$, and the algorithm may finish with the following superstring of~length~$16$:
	\[
		\langle \texttt{babab}, \texttt{ababa}, \texttt{aaa}, \texttt{dabcd}, \texttt{abcdab}, \texttt{cdabc}\rangle = \texttt{bababaaadabcdabc}.
	\]
\end{graybox}

All pairwise overlaps between input strings can~be conveniently represented 
in~an~\emph{overlap graph}:
it~is a~complete directed
graph with vertices~$\mathcal{S}$ and edges $\mathcal{S} \times \mathcal{S}$ (thus, self-loops are included) where the length of~an~edge $(s, t)$ is $|\!\operatorname{ov}(s, t)|$. A~\emph{prefix graph} 
is~defined similarly with the only difference that the lengths are $|\!\operatorname{pref}(s, t)|$. 

Solving SCS is~the same as~solving MAX-ATSP in~the overlap graph, that~is,
finding a~Hamiltonian path of~maximum total length. At~the same time, approximation algorithms for MAX-ATSP do~not give approximation guarantees for SCS directly since the total overlap may be~much larger
than $\operatorname{OPT}$. 

\tikzstyle{v}=[rectangle, inner sep=.5mm, draw]
\tikzstyle{l}=[pos=.19, rectangle, inner sep=.2mm, sloped, fill=fc]
\tikzstyle{e}=[bend right=10, ->, >=latex, thin]
\tikzstyle{le}=[->, >=latex, thin]

\begin{graybox}
	The overlap and prefix graphs for our running example are~shown below. In~the overlap graph, a~Hamiltonian path of~maximum
	total overlap~15 is~highlighted.
	\begin{center}
		\begin{tikzpicture}
			\begin{scope}
				\foreach \s [count=\n] in {abcdab, cdabc, dabcd, ababa, aaa, babab}
					\node[v] (\s) at (60 * \n:20mm) {\strut \tt \s};
					
				\path (abcdab) edge[le, loop right, gray] node[l] {\scriptsize 2} (abcdab);
				\path (cdabc) edge[le, loop left, gray] node[l] {\scriptsize 1} (cdabc);
				\path (dabcd) edge[le, loop left, gray] node[l] {\scriptsize 1} (dabcd);
				\path (ababa) edge[le, loop left, gray] node[l] {\scriptsize 3} (ababa);
				\path (aaa) edge[le, loop right, gray] node[l] {\scriptsize 2} (aaa);
				\path (babab) edge[le, loop right, gray] node[l] {\scriptsize 3} (babab);
				
				\foreach \f/\t/\p/\o in {abcdab/cdabc/2/4,
					abcdab/dabcd/3/3,
					abcdab/ababa/4/2,
					abcdab/aaa/6/0,
					abcdab/babab/5/1,
					cdabc/abcdab/2/3,
					cdabc/dabcd/1/4,
					cdabc/ababa/5/0,
					cdabc/aaa/5/0,
					cdabc/babab/5/0,
					dabcd/abcdab/1/4,
					dabcd/cdabc/3/2,
					dabcd/ababa/5/0,
					dabcd/aaa/5/0,
					dabcd/babab/5/0,
					ababa/abcdab/4/1,
					ababa/cdabc/5/0,
					ababa/dabcd/5/0,
					ababa/aaa/4/1,
					ababa/babab/1/4,
					aaa/abcdab/2/1,
					aaa/cdabc/3/0,
					aaa/dabcd/3/0,
					aaa/ababa/2/1,
					aaa/babab/3/0,
					babab/abcdab/3/2,
					babab/cdabc/5/0,
					babab/dabcd/5/0,
					babab/ababa/1/4,
					babab/aaa/5/0}
				\path (\f) edge[e, gray] node[l] {\scriptsize \o} (\t);
				
				\foreach \f/\t/\o in {aaa/ababa/1, ababa/babab/4, babab/abcdab/2, abcdab/cdabc/4, cdabc/dabcd/4}
					\path (\f) edge[e, color=black, thick] node[l, color=black, fill=gray!25] {\scriptsize \o} (\t);
			\end{scope}
			
			\begin{scope}[xshift=80mm]
				\foreach \s [count=\n] in {abcdab, cdabc, dabcd, ababa, aaa, babab}
					\node[v] (\s) at (60 * \n:20mm) {\strut \tt \s};
					
				\path (abcdab) edge[le, loop right, gray] node[l] {\scriptsize 4} (abcdab);
				\path (cdabc) edge[le, loop left, gray] node[l] {\scriptsize 4} (cdabc);
				\path (dabcd) edge[le, loop left, gray] node[l] {\scriptsize 4} (dabcd);
				\path (ababa) edge[le, loop left, gray] node[l] {\scriptsize 2} (ababa);
				\path (aaa) edge[le, loop right, gray] node[l] {\scriptsize 1} (aaa);
				\path (babab) edge[le, loop right, gray] node[l] {\scriptsize 2} (babab);
					
					\foreach \f/\t/\p/\o in {abcdab/cdabc/2/4,
						abcdab/dabcd/3/3,
						abcdab/ababa/4/2,
						abcdab/aaa/6/0,
						abcdab/babab/5/1,
						cdabc/abcdab/2/3,
						cdabc/dabcd/1/4,
						cdabc/ababa/5/0,
						cdabc/aaa/5/0,
						cdabc/babab/5/0,
						dabcd/abcdab/1/4,
						dabcd/cdabc/3/2,
						dabcd/ababa/5/0,
						dabcd/aaa/5/0,
						dabcd/babab/5/0,
						ababa/abcdab/4/1,
						ababa/cdabc/5/0,
						ababa/dabcd/5/0,
						ababa/aaa/4/1,
						ababa/babab/1/4,
						aaa/abcdab/2/1,
						aaa/cdabc/3/0,
						aaa/dabcd/3/0,
						aaa/ababa/2/1,
						aaa/babab/3/0,
						babab/abcdab/3/2,
						babab/cdabc/5/0,
						babab/dabcd/5/0,
						babab/ababa/1/4,
						babab/aaa/5/0}
					\path (\f) edge[e, gray] node[l] {\scriptsize \p} (\t);
					
			\end{scope}
		\end{tikzpicture}
	\end{center}
\end{graybox}

The greedy superstring algorithm can be~conveniently represented in~terms
of~the overlap graph: it~constructs a~Hamiltonian path by~scanning 
the list~of~edges in~the order of~nonincreasing length and takes the current edge into
a~solution if~it does not create a~node of~in-degree or~out-degree more than one and 
does not close a~cycle. This way, at~any point of~the execution of~the greedy algorithm,
its solution is a~collection of~disjoint paths.

\begin{graybox}
	The greedy algorithm starts by~considering edges of~length~$4$.
	It~may take edges ${\tt dabcd} \to {\tt abcdab}$ and ${\tt abcdab} \to {\tt cdabc}$.
	Then, it~skips the edge ${\tt cdabc} \to {\tt dabcd}$ as~it~would create a~cycle.
	The greedy algorithm~then takes an~edge ${\tt babab} \to {\tt ababa}$. Finally, by~skipping a~bunch of~edges,
	it~takes edges ${\tt ababa} \to {\tt aaa}$ and ${\tt aaa} \to {\tt dabcd}$.
	
	\begin{center}
		\begin{tikzpicture}
			\foreach \s [count=\n] in {abcdab, cdabc, dabcd, ababa, aaa, babab}
				\node[v] (\s) at (60 * \n:18mm) {\strut \tt \s};
				
			\foreach \f/\t/\o in {dabcd/abcdab/4, abcdab/cdabc/4, babab/ababa/4, ababa/aaa/1, aaa/dabcd/0}
				\path (\f) edge[e, color=black, thick] node[l, color=black, fill=gray!25] {\scriptsize \o} (\t);
		\end{tikzpicture}
	\end{center}
\end{graybox}

\subsection{Cycle Covers}
If~one allows the greedy algorithm to~produce cycles
(without violating the in-degree and out-degree constraints), then 
it~constructs an~optimal \emph{cycle cover}~$\mathcal{C}$, that~is, a~collection of~cycles
that cover every node of~the graph and has maximum total overlap.
This is~proved in~\cite[Theorem~10]{BJLTY91}.
The constructed cycle cover may contain self-loops.
It~is not difficult to~see that $\mathcal{C}$~is also a~cycle cover
of~minimum total length in~the prefix graph.
Indeed, consider a~cycle $C=(t_0, t_1, \dotsc, t_{r-1}, t_0)$ of~$\mathcal{C}$.
Its total overlap is~equal
to~$\sum_{i \in [r]} |\!\operatorname{ov}(t_{i-1}, t_{i})|$
and its total prefix
is~$\sum_{i \in [r]} |\!\operatorname{pref}(t_{i-1}, t_{i})|$ 
(here, indices are taken modulo~$r$). 
Note, however, that 
$|\!\operatorname{pref}(t_{i-1}, t_{i})|+|\!\operatorname{ov}(t_{i-1}, t_{i})|=|t_{i-1}|$.
Consequently, the total overlap of~$C$ and its total prefix sum to~the total length of~the strings on~$C$.
Thus, maximizing total overlap minimizes total prefix.

For $C \in \mathcal{C}$, let $w(C)$ be~the total prefix of~the cycle~$C$ and $W=\sum_{C \in \mathcal{C}}w(C)$~be the total prefix of~$\mathcal{C}$. Then,
\begin{equation}
	W \le \operatorname{OPT}.
	\label{eq:wopt}
\end{equation}
Indeed, let $(t_1, \dotsc, t_n)$ be~a~permutation of~the input strings whose overlap merge has length~$\operatorname{OPT}$. Then, the total prefix of~the cycle
$(t_1, \dotsc, t_n, t_1)$ is~at~most $\operatorname{OPT}$, but at~least~$W$
(since this single cycle is also a~cycle cover):
\[\operatorname{OPT}=\sum_{i \in [n-1]}|\!\operatorname{pref}(t_i, t_{i+1})|+|t_n| \ge \sum_{i \in [n-1]}|\!\operatorname{pref}(t_i, t_{i+1})|+|\!\operatorname{pref}(t_n, t_1)| \ge W.\]

\begin{graybox}
	The figure below shows the same optimal cycle cover 
	in~the overlap and prefix graphs.
	Its total overlap and total prefix are~$22$ and~$W = 7$, respectively; their sum is~$29$, the total length of~the input strings.
	\begin{center}
		\begin{tikzpicture}
			\begin{scope}
				\foreach \s [count=\n] in {abcdab, cdabc, dabcd, ababa, aaa, babab}
				\node[v] (\s) at (60 * \n:18mm) {\strut \tt \s};
				
				\foreach \f/\t/\o in {abcdab/cdabc/4, cdabc/dabcd/4, dabcd/abcdab/4, babab/ababa/4, ababa/babab/4}
				\path (\f) edge[e, black] node[l, fill=fc] {\scriptsize \o} (\t);
				
				\path (aaa) edge[le, loop right] node[l] {\scriptsize 2} (aaa);
			\end{scope}
			
			\begin{scope}[xshift=80mm]
				\foreach \s [count=\n] in {abcdab, cdabc, dabcd, ababa, aaa, babab}
				\node[v] (\s) at (60 * \n:18mm) {\strut \tt \s};
				
				\foreach \f/\t/\o in {abcdab/cdabc/2, cdabc/dabcd/1, dabcd/abcdab/1, babab/ababa/1, ababa/babab/1}
				\path (\f) edge[e] node[l] {\scriptsize \o} (\t);
				
				\path (aaa) edge[le, loop right] node[l] {\scriptsize 1} (aaa);
			\end{scope}
		\end{tikzpicture}
	\end{center}
\end{graybox}

\subsection{Opening the Cycles}
To~turn a~cycle cover into a~Hamiltonian path, one may proceed as~follows: for each cycle, remove its edge of~minimum overlap; then, merge the resulting paths into a~Hamiltonian path.

Formally, 
consider a~cycle $C=(t_1, \dotsc, t_{r}, t_1)$ of~$\mathcal{C}$
and assume that the edge $(t_{r}, t_1)$ was the last edge of~this cycle added by~the algorithm.
We~call~it a~\emph{cycle-closing edge}; it~has minimum overlap among the edges of~the cycle.
Denote its length by $o(C)$ and let $O=\sum_{C \in \mathcal{C}}o(C)$. Now, let
\(\sigma(C)=\langle t_1, \dotsc, t_{r}\rangle\) be~a~\emph{representative string} for~$C$.
Then,
\[
	\sigma(C)=
	\operatorname{pref}(t_1, t_2) 
	\circ 
	\dotsb 
	\circ
	\operatorname{pref}(t_{r-1}, t_r)
	\circ
	t_r 
	=
	\operatorname{pref}(t_1, t_2) 
	\circ 
	\dotsb 
	\circ
	\operatorname{pref}(t_{r-1}, t_r)
	\circ
	\operatorname{pref}(t_{r}, t_1)
	\circ
	\operatorname{ov}(t_{r}, t_1).
\]
Hence,
\[|\sigma(C)|=w(C)+o(C).\]
An~algorithm that constructs an~optimal cycle cover and returns a~concatenation is~called MGREEDY; it~was introduced in~\cite{BJLTY91}.
The resulting superstring has length
\begin{equation}
	\sum_{C \in \mathcal{C}}(w(C)+o(C))=W+O.
	\label{eq:wo}
\end{equation}

\begin{graybox}
	The cycle-closing edges of~the optimal cycle cover are shown as~dashed edges below. 
	Their total overlap is~$O=4+4+2=10$.
	To~the right, we~show the corresponding representative strings.
	By~concatenating the representative strings, one gets a~superstring 
	{\tt abcdabcdbababaaaa} 
	of~length $W+O=7+10=17$. 
	
	\begin{center}
		\begin{tikzpicture}
				\foreach \s [count=\n] in {abcdab, cdabc, dabcd, ababa, aaa, babab}
					\node[v] (\s) at (60 * \n:18mm) {\strut \tt \s};
				
				\foreach \f/\t/\o in {abcdab/cdabc/4, cdabc/dabcd/4, babab/ababa/4}
					\path (\f) edge[e, black] node[l, fill=fc] {\scriptsize \o} (\t);
					
				\foreach \f/\t/\o in {dabcd/abcdab/4, ababa/babab/4}
					\path (\f) edge[e, black, dashed] node[l, fill=fc] {\scriptsize \o} (\t);
				
				\path (aaa) edge[le, loop right, dashed] node[l] {\scriptsize 2} (aaa);

				\node[right, text width=60mm] at (3, 0) {
					\begin{align*}
						\sigma(C_1) &= {\tt abcdabcd}\\
						\sigma(C_2) &= {\tt bababa}\\
						\sigma(C_3) &= {\tt aaa}\\
					\end{align*}
				};
		\end{tikzpicture}
	\end{center}
\end{graybox}

\subsection{Upper Bounding the Overlap of~Cycle-Closing Edges}
Since we~already know that $W \le \operatorname{OPT}$ (see \eqref{eq:wopt}),
to~turn an~upper bound $W+O$ (see \eqref{eq:wo}) 
into an~approximation guarantee, 
we~need to~upper bound $O$~in~terms of~$\operatorname{OPT}$.
This is~exactly what many previous papers~do.
Namely, they prove an~upper bound of~the form
\begin{equation}
	O \le \operatorname{OPT} + \gamma W,
	\label{eq:o}
\end{equation}
for a~constant~$\gamma$.
Combined with~\eqref{eq:wopt} and~\eqref{eq:wo}, this inequality 
immediately implies that MGREEDY is~a~$(2+\gamma)$-approximation.
As~shown by~\cite[Section 5]{BJLTY91}, this implies that the approximation guarantee of~the greedy algorithm is~also at~most $(2+\gamma)$.

The MGREEDY algorithm simply concatenates the representative strings.
If,~instead, one applies an~$\alpha$-approximation algorithm of~MAX-ATSP in~the overlap graph
of~the representative strings and merges them in~the found order,
the resulting algorithm has approximation guarantee $(2+(1-\alpha)\gamma)$.
This was shown in~\cite{BJJ97,Mucha07,Mucha13}, a~proof can also be~found in~\cite[Theorem 3.1]{EMV22}.
The history of~improvements of~upper bounds on~$\gamma$ is~the following:
\cite{BJLTY91} proved the inequality with $\gamma=2$,
\cite{KS05} improved~it to~$3/2$, \cite{EMV22} pushed it~to~$1.425$,
and \cite{EMV23} improved it~further to~$1.396$.

We~prove the inequality~\eqref{eq:o} for $\gamma=1$ (see \Cref{thm:periodic-budget}).
This implies that the greedy algorithm is~$3$-approximate.
Plugging~in the best known approximation guarantee $\alpha=2/3$~\cite{KLSS03, PEZ12} for MAX-ATSP gives a~polynomial-time $7/3$-approximation for SCS.
Our proof of~a~stronger upper bound on~$\gamma$ does not come at~a~cost
of a~more involved analysis: in~fact, our proof is~much simpler.
We~achieve this by~considering global properties of~the overlap graph 
instead of~considering each cycle of~the optimal cycle cover separately.
Then, in~Theorem~\ref{prop:black-box-tightness}, we~prove that $\gamma=1$ is~the optimal endpoint of~this approach: for every $\gamma<1$, there exists 
an~input to~SCS where the inequality~\eqref{eq:o} fails.

\subsection{Properties of~Cycle Strings}
A~cycle with small total prefix may~be an~indicator of~the high repetitive structure of~the strings of~this cycle. For example, a~cycle through strings
\[{\tt abcdabcdabcd},\, {\tt cdabcdabcdab},\, {\tt dabcdabcdabc}\]
has total prefix~$2 + 1 + 1 = 4$, whereas its minimum edge overlap is~$10$. 
In particular, $O$ may be much larger than $W$.

Two strings are \emph{equivalent} if~one is~a~cyclic rotation of~the other.
A~string~$z$ is~\emph{primitive} if~there is~no~string~$v$ and an~integer $k\ge2$ such that $z=v^k$.
For a~cycle $C=(t_1,\dotsc,t_r,t_1)$ of~$\mathcal{C}$, define its \emph{cycle string} by
\[
	\tau(C)=\operatorname{pref}(t_1,t_2)\circ\cdots\circ\operatorname{pref}(t_r,t_1),
\] hence $|\tau(C)| = w(C)$.

\begin{lemma}[{\cite[Lemma 2.5]{Mucha13}}]\label{lem:primitive-cycle-words}
	Every cycle string~$\tau(C)$ is~primitive, and the cycle strings of~distinct cycles of~$\mathcal C$ are pairwise non-equivalent.
\end{lemma}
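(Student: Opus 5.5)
Both claims will be derived from the optimality of $\mathcal{C}$ together with the standard fact that every string on a cycle is a substring of a power of the cycle string. Write $C=(t_1,\dotsc,t_r,t_1)$ and $\tau=\tau(C)$. Each $t_i$ is a prefix of $\operatorname{pref}(t_i,t_{i+1})\circ t_{i+1}$. Unrolling this around the cycle shows that $t_i$ is a prefix of $\tau_i^{k}$ for large $k$, where $\tau_i$ is the rotation of $\tau$ starting at $\operatorname{pref}(t_i,t_{i+1})$. In particular, each $t_i$ has period $|\tau|=w(C)$.

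\textbf{Primitivity.} Suppose $\tau=v^k$ with $k\ge2$, and set $p=|v|=w(C)/k$. Then every $t_i$ is a prefix of $\tau_i^\infty=v_i^\infty$ for a rotation $v_i$ of $v$. Hence the position of $t_i$ inside the infinite periodic word $v^\infty$ is determined modulo $p$, not merely modulo $kp$. Place all strings of $C$ at their offsets in $\tau^\infty$: $t_1$ at $0$, $t_2$ at $|\operatorname{pref}(t_1,t_2)|$, and so on, with $t_{r+1}=t_1$ landing at $w(C)$. Reduce these offsets modulo $p$ and sort the strings by offset in $[0,p)$, breaking ties arbitrarily. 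Consecutive strings in this order, read cyclically, overlap in $v^\infty$, so $\operatorname{pref}(s,s')$ is at most the offset difference. This needs care when the difference is $0$ or when one string is contained in another; the substring-free assumption rules the latter out, and for equal offsets the prefix is forced to be $\ge1$. Summing the differences gives a single cycle through the same strings with total prefix at most $p<w(C)$. Equal offsets need more care: two distinct strings with the same offset mod $p$ in a $p$-periodic word would make the shorter one a prefix of the longer, which is impossible since the strings are substring-free and distinct. Replacing $C$ by this cycle strictly decreases the total prefix of $\mathcal{C}$, contradicting optimality. The main obstacle is justifying $|\operatorname{pref}(s,s')|\le$ offset difference. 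For this I will use that $\operatorname{ov}(s,s')$ is the \emph{longest} proper overlap, so any valid alignment in $v^\infty$ in which $s'$ starts strictly after $s$ and ends after $s$ ends bounds the prefix from above. The ending condition holds because a string starting later but ending earlier would be a substring.

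\textbf{Non-equivalence.} Suppose $C\neq C'$ have equivalent cycle strings. Then all strings of both cycles are substrings of the same periodic word $\tau^\infty$, each located at an offset modulo $w=|\tau|$. Merge the two cycles into one cycle by sorting all their strings by offset in $[0,w)$, exactly as above. The resulting single cycle has total prefix at most $w<w(C)+w(C')=2w$, which again contradicts optimality of $\mathcal{C}$. Distinct offsets are guaranteed by the same substring-freeness argument, and strings with equal offsets would coincide, contradicting distinctness. I would prove the sorting step once as a lemma: any set of pairwise non-nested strings occurring in a $q$-periodic word at distinct offsets mod $q$ lies on a cycle of total prefix at most $q$. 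Both claims then follow directly from this lemma.
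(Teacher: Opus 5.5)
Your proof is correct. The paper gives no proof of this lemma and only cites \cite{Mucha13}; your exchange argument is the standard one behind the cited result. All strings of the cycle, or of both cycles, lie in a single periodic word, and ordering them by offset modulo the period gives a cycle of total prefix at most that period, which contradicts the minimality of $\mathcal{C}$.

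Two small loose ends are cosmetic only. Your remark that for equal offsets the prefix is forced to be $\ge1$ is moot, since you then show correctly that substring-freeness rules out equal offsets. And when an offset difference $d$ is at least $|s|$, the bound $|\operatorname{pref}(s,s')|\le|s|\le d$ holds trivially rather than coming from an overlap alignment.
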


\begin{graybox}
	In~the running example, the three cycles~$C_1,C_2,C_3$ have the following cycle strings:
	\[
		\tau(C_1) =\texttt{abcd}, \qquad \tau(C_2) =\texttt{ba}, \qquad \tau(C_3) =\texttt{a}.
	\] 
	Then, $\sigma(C_1) = \tau(C_1)^2, \sigma(C_2) = \tau(C_2)^3$, and $\sigma(C_3) = \tau(C_3)^3$ (in general, $\sigma(C)$ is~not always a~power of~$\tau(C)$).
\end{graybox}

Fix an~arbitrary total order on~the alphabet.
A~\emph{Lyndon word}~\cite{Shirshov53,Lyndon54} is~a~string that is~strictly smaller than each of~its proper suffixes; equivalently, it~is~primitive and lexicographically smaller than all its nontrivial cyclic rotations.
Every equivalence class of~primitive strings has a~unique Lyndon word.
Moreover, a~Lyndon word~$z$ has no~period $p<|z|$: otherwise its suffix of~length $|z|-p$ would equal a~proper prefix of~$z$ and hence would be~smaller than~$z$, contrary to~the definition.

For a~string~$x$ and an~integer $m\ge0$, let $\operatorname{Sub}_m(x)$ be~the set of~length-$m$ substrings of~$x$.
For a~string~$z$ and an~integer $m\ge0$, let $\operatorname{CSub}_m(z)$ be~the set of~length-$m$ substrings of~$z^\infty$.
For every cycle~$C\in \mathcal{C}$, let~$\tau'(C)$ be~the unique Lyndon word equivalent to~$\tau(C)$.

\begin{lemma}[{\cite[Claim 2]{BJLTY91}}]\label{lem:cyclic-substrings}
	For every cycle~$C\in \mathcal{C}$,
	\[\!\operatorname{CSub}_{o(C)+1}(\tau'(C))=\operatorname{CSub}_{o(C)+1}(\tau(C))\subseteq\operatorname{Sub}_{o(C)+1}(\operatorname{OPT}).\]
\end{lemma}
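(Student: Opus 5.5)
We have to prove two things: that rotating the cycle string does not change its set of cyclic substrings, and that every length-$(o(C)+1)$ cyclic substring of $\tau(C)$ occurs in $\operatorname{OPT}$.

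\emph{Step 1 (the equality).} Since $\tau'(C)$ is a cyclic rotation of $\tau(C)$, we can write $\tau(C)=uv$ and $\tau'(C)=vu$. Then
\[
(vu)^\infty = v\,(uv)^\infty .
\]
So $(vu)^\infty$ is a suffix of $(uv)^\infty$. Conversely, $(uv)^\infty$ is a suffix of $(vu)^\infty$, obtained by dropping the first $|v|$ symbols of $u\,(vu)^\infty$. Both infinite words are periodic with period $|\tau(C)|$. Hence each has the same set of finite factors as any of its suffixes, and $\operatorname{CSub}_m$ coincides for the two words for every $m$.

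\emph{Step 2 (the inclusion).} Write $C=(t_1,\dots,t_r,t_1)$, put $w=w(C)$, and let $p_i=|\operatorname{pref}(t_i,t_{i+1})|$, indices taken cyclically. The key claim is that every string $t_i$ on the cycle is a substring of $\tau(C)^\infty$. More precisely, $t_i$ is the length-$|t_i|$ factor starting at offset $p_1+\dots+p_{i-1}$.

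I prove this by viewing $\langle t_1,\dots,t_r,t_1,t_2,\dots\rangle$ (the overlap merge around the cycle, repeated) as an infinite word. It equals
\[
\operatorname{pref}(t_1,t_2)\operatorname{pref}(t_2,t_3)\cdots = \tau(C)^\infty .
\]
Each $t_i$ sits at its offset inside this word, because $t_i=\operatorname{pref}(t_i,t_{i+1})\circ\operatorname{ov}(t_i,t_{i+1})$ and $\operatorname{ov}(t_i,t_{i+1})$ is a prefix of $t_{i+1}$. The lengths work out: $|t_i| = p_i + |\operatorname{ov}(t_i,t_{i+1})|$ and the overlap is shorter than $|t_{i+1}|$, so by induction the containment propagates along the cycle.

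Now take any length-$(o(C)+1)$ factor $x$ of $\tau(C)^\infty$. By periodicity we may assume $x$ starts at some offset $q\in[0,w)$. Choose $i$ such that $P_{i-1}\le q<P_i$, where $P_j=p_1+\dots+p_j$.

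The string $t_i$ occupies positions $[P_{i-1},\,P_{i-1}+|t_i|)$, and
\[
|t_i| = p_i + |\operatorname{ov}(t_i,t_{i+1})| \ge p_i + o(C),
\]
since the closing edge has minimum overlap on the cycle. Starting at $q\le P_{i-1}+p_i-1$, the factor $x$ ends at $q+o(C)\le P_{i-1}+p_i+o(C)-1$, which is inside $t_i$. So $x$ is a substring of $t_i$, and $t_i$ is a substring of $\operatorname{OPT}$. This proves the inclusion.

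\emph{Main obstacle.} The main obstacle is Step 2's embedding of every $t_i$ into $\tau(C)^\infty$ with consistent offsets. In particular, I need to verify that when an overlap is long (longer than the next prefix), the containment still holds transitively rather than only for adjacent pairs. I would handle this by induction on $i$, showing that $t_{i+1}$ agrees with $\tau(C)^\infty$ starting at $P_i$: its first $|\operatorname{ov}(t_i,t_{i+1})|$ symbols come from $t_i$, and its remaining symbols come from $t_{i+2}$ and later strings by the same argument applied cyclically.

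Alternatively, I can use the standard fact that the merge $\langle t_1,\dots,t_r,t_1\rangle$ has period $w$. This is consistent with the minimality of the cycle cover, because all overlaps inside an optimal cover chain together.
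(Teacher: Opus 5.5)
Your proof is correct. The paper gives no argument for this lemma; it quotes it from \cite[Claim~2]{BJLTY91}. Your proof is the standard one: each $t_i$ occurs in $\tau(C)^\infty$ at offset $P_{i-1}$, and $|t_i| = p_i + |\operatorname{ov}(t_i,t_{i+1})| \ge p_i + o(C)$. Hence every window of length $o(C)+1$ starting in $[P_{i-1},P_i)$ lies inside that occurrence of $t_i$.

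A few points to tidy up.

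\begin{itemize}
\item In Step~1 the two justifications are swapped. The identity $(vu)^\infty = v(uv)^\infty$ shows that $(uv)^\infty$ is a suffix of $(vu)^\infty$, and $(uv)^\infty = u(vu)^\infty$ gives the other direction. One direction suffices anyway, because a purely periodic infinite word has the same factors as each of its suffixes.

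\item The ``main obstacle'' paragraph makes the embedding look harder than it is. An induction on $i$ that pulls symbols backward from $t_i$ still needs the case $t_1$, and that case requires the forward argument, which alone suffices. Write $\pi_j = \operatorname{pref}(t_j,t_{j+1})$. Since $t_j = \pi_j \circ \operatorname{ov}(t_j,t_{j+1})$ and the overlap is a prefix of $t_{j+1}$, $t_j$ is a prefix of $\pi_j t_{j+1}$ for every $j$. Iterating, $t_i$ is a prefix of $\pi_i \pi_{i+1} \cdots \pi_{i+k-1} t_{i+k}$ for every $k \ge 1$ (indices modulo $r$). Every $\pi_j$ is nonempty, so you can choose $k$ with $|\pi_i \cdots \pi_{i+k-1}| \ge |t_i|$. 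This shows that $t_i$ is a prefix of $\pi_i \pi_{i+1} \cdots$, which is $\tau(C)^\infty$ read from offset $P_{i-1}$.

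\item Optimality of the cycle cover plays no role in this lemma. It is needed only for Lemma~\ref{lem:primitive-cycle-words}, which makes $\tau'(C)$ well defined. The inclusion holds for any cycle, using only that $o(C)$ is the minimum edge overlap on it. Your closing remark about minimality of the cover can therefore be dropped.
\end{itemize}
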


\begin{graybox}
	Consider the optimal superstring
	\(\operatorname{OPT}=\texttt{aaabababcdabcd}.\)
	Using the alphabet order $\texttt{a}<\texttt{b}<\texttt{c}<\texttt{d}$, the Lyndon words corresponding to~the three cycle strings are
	\[\tau'(C_1)=\texttt{abcd},\qquad \tau'(C_2)=\texttt{ab},\qquad \tau'(C_3)=\texttt{a}.\]
	The minimum overlaps of~the three cycles are $o(C_1)=o(C_2)=4$ and~$o(C_3)=2$.
	Then,
	\begin{align*}
		\operatorname{CSub}_{5}(\tau'(C_1))&= \{\texttt{abcda},\texttt{bcdab},\texttt{cdabc},\texttt{dabcd}\}, \\
		\operatorname{CSub}_{5}(\tau'(C_2))&=\{\texttt{ababa},\texttt{babab}\}, \\
		\operatorname{CSub}_{3}(\tau'(C_3))&=\{\texttt{aaa}\}.
	\end{align*}
	Thus, every length-$(o(C)+1)$ substring of~$\tau(C)^\infty$ occurs as~a~substring of~$\operatorname{OPT}$.
\end{graybox}

\section{Optimal Bound on~the Cycle-Cover Inequality}\label{sec:periodic}
In~this section, we~give a~proof of~our main technical contribution: the smallest value of~$\gamma$ where the inequality~\eqref{eq:o} holds is~equal to~1.

\subsection{Upper Bound}

\begin{theorem}
	\label{thm:periodic-budget}
	\[O \le \operatorname{OPT}+W.\]
\end{theorem}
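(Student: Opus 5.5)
We prove $O \le \operatorname{OPT}+W$ by a global counting argument on the positions of $\operatorname{OPT}$, rather than by treating each cycle separately. For every cycle $C\in\mathcal C$ write $z_C=\tau'(C)$, $w_C=|z_C|=w(C)$ and $o_C=o(C)$. Since the edge of length $o_C$ lies on a cycle in which every string has period $w_C$, and no input string contains another, we have $o_C\ge w_C$ for non-loop cycles; a small adjustment handles degenerate cases. By \Cref{lem:cyclic-substrings}, every length-$(o_C+1)$ window of $z_C^\infty$ occurs in $\operatorname{OPT}$. In particular, the window starting at the beginning of $z_C$, namely the prefix of length $o_C+1$ of $z_C^\infty$, occurs in $\operatorname{OPT}$. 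Fix one occurrence and let $I_C$ be its interval of positions, so $|I_C|=o_C+1$ and $\operatorname{OPT}$ restricted to $I_C$ has period $w_C$ and begins with the Lyndon word $z_C$.

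The key claim is a bound on how these intervals can overlap: for distinct cycles $C\neq D$,
\[
|I_C\cap I_D| < w_C+w_D .
\]
This follows from the Fine--Wilf theorem. If the overlap had length at least $w_C+w_D-\gcd(w_C,w_D)$, the common segment would have period $g=\gcd(w_C,w_D)$. Since $z_C$ and $z_D$ are primitive, this would force some rotation of $z_C$ to equal some rotation of $z_D$, i.e.\ $\tau(C)$ and $\tau(D)$ would be equivalent, contradicting \Cref{lem:primitive-cycle-words}. For this contradiction to go through, the overlap must contain a full period of each word; the precise threshold therefore has to be checked, and this is the main technical step.

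A naive union bound gives only $\sum_C(o_C+1)\le \operatorname{OPT}+\sum_{C\ne D}|I_C\cap I_D|$, which is too weak because pairwise overlaps could accumulate. We instead sort the intervals by their left endpoints $\ell_C$ and use the Lyndon property to argue that the intervals form a chain. Two intervals that both start at the beginning of their respective Lyndon words cannot start at the same position, and an interval nested inside another would force an equivalence. Hence both left and right endpoints increase along the chain, and only consecutive intervals need to be charged. Where necessary, we can shift starting positions using the freedom in \Cref{lem:cyclic-substrings} to choose any rotation. Processing the intervals left to right, each new interval $I_D$ contributes at least $|I_D|-|I_D\cap I_{\mathrm{prev}}|$ fresh positions. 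For the total loss to be at most $W$, we need each overlap to be charged to a single word's period, i.e.\ $|I_C\cap I_D|\le \max(w_C,w_D)$ or a similar per-interval charge, rather than $w_C+w_D$. This is the main obstacle.

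To obtain this sharper bound, we first try the following. Suppose $I_D$ starts inside $I_C$ at offset $\delta$ and the overlap has length at least $w_D$. Then the segment of $\operatorname{OPT}$ starting at $\ell_D$ of length $w_D$ equals $z_D$, and it also has period $w_C$. Lyndon minimality of $z_D$ among its rotations, combined with the Lyndon minimality of $z_C$ at positions congruent to $\ell_C$ modulo $w_C$, should force $z_D$ to be comparable to a rotation of $z_C$ in a way that yields a contradiction, so that each overlap is less than $w_D$. Charging each overlap to the later interval then gives
\[
\sum_C (o_C+1) \le \operatorname{OPT}+\sum_C w_C ,
\]
and therefore $O\le \operatorname{OPT}+W$. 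If the Lyndon comparison only yields overlap less than $w_C+w_D$, the fallback is to choose the occurrences more cleverly. For example, we could choose, among all occurrences, a set of starting positions minimizing total overlap, and use an exchange argument.
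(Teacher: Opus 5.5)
There is a genuine gap at the step you flag, and it cannot be closed inside your framework.

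\textbf{The chain claims are false.} Fine--Wilf only excludes overlaps of length at least $w_C+w_D-\gcd(w_C,w_D)$, and short intervals never reach that threshold.
\begin{itemize}
\item \emph{Same start, nested, no equivalence.} In $u=\texttt{ababba}$, the prefix window $\texttt{abab}$ of $(\texttt{ab})^\infty$ and the prefix window $\texttt{ababba}$ of $(\texttt{ababb})^\infty$ both start at position $0$. The first is nested in the second, yet $\texttt{ab}$ and $\texttt{ababb}$ are not equivalent.
\item \emph{Overlap $<w_D$ fails for a proper chain.} In $u=\texttt{aabababab}$, take the window $\texttt{aababa}$ of $(\texttt{aabab})^\infty$ at position $0$ and the window $\texttt{abababab}$ of $(\texttt{ab})^\infty$ at position $1$. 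They share $5>w_D=2$ positions.
\end{itemize}

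\textbf{One window per cycle is not enough information.} Your argument only uses that \emph{one} window of $z_C^\infty$ (its prefix of length $o_C+1$) occurs in $\operatorname{OPT}$. Under that hypothesis your target $\sum_C(o_C+1-w_C)\le\operatorname{OPT}$ is false.
\begin{itemize}
\item Let $u=\texttt{a}\texttt{b}\,\texttt{a}\texttt{b}^{2}\cdots\texttt{a}\texttt{b}^{k}$, which has length $\Theta(k^2)$.
\item For $1\le s\le t<k$, the words $z_{s,t}=\texttt{a}\texttt{b}^{s}\texttt{a}\texttt{b}^{s+1}\cdots\texttt{a}\texttt{b}^{t}$ are distinct Lyndon words.
\item The prefix $z_{s,t}\texttt{a}\texttt{b}^{s}$ of $z_{s,t}^\infty$ occurs in $u$ starting at the factor $\texttt{a}\texttt{b}^{s}$, because $u$ continues with $\texttt{a}\texttt{b}^{t+1}$ after $z_{s,t}$.
\item So the excesses $o+1-w$ sum to $\sum_s(s+1)(k-s)=\Theta(k^3)$, which exceeds $|u|$ for large $k$.
\end{itemize}
Hence neither an exchange argument over occurrences nor any per-interval charging scheme can succeed.

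\textbf{The missing idea.} What rules out such configurations is that \emph{all} cyclic windows of $\tau(C)^\infty$ of length $o(C)+1$ occur in $\operatorname{OPT}$, and hence all cyclic windows of every length $m\le o(C)+1$ do too. Your plan never uses this beyond a remark about choosing rotations. The paper exploits it one length at a time, not via positions:
\begin{itemize}
\item Fix $m$. For each cycle with $w(C)\le m\le o(C)+1$, its $w(C)$ cyclic windows of length $m$ form a closed walk in the Rauzy graph $G_m$ of $\operatorname{OPT}$.
\item Delete the lexicographically smallest window of each such cycle, in increasing order. Each deletion keeps $G_m$ connected.
\item This bounds the number of such cycles by $|\operatorname{Sub}_m(\operatorname{OPT})|-|\operatorname{Sub}_{m-1}(\operatorname{OPT})|+1$ (\Cref{lem:substring-budget}).
\item Summing over $m$ telescopes to $\sum_C(o(C)-w(C)+2)\le\operatorname{OPT}$ over cycles with $o(C)+1\ge w(C)$. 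The remaining cycles have $o(C)<w(C)$ and only help.
\end{itemize}

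\textbf{A minor point.} Your claim that $o_C\ge w_C$ for non-loop cycles is also false. For example, $\{\texttt{ab},\texttt{ba}\}$ gives a $2$-cycle with $w=2$ and $o=1$. Discarding cycles with $o(C)+1<w(C)$, as the paper does, is the right fix.
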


\Cref{lem:cyclic-substrings} shows that, for every cycle~$C$, all length-$(o(C)+1)$ substrings of~$\tau(C)^\infty$ occur in~$\operatorname{OPT}$.
The lemma below bounds the number of~cycles that can use the same superstring.

\begin{lemma}\label{lem:substring-budget}
	Let~$u$ be~a~string, $m\in\{1,\ldots,|u|\}$, and let $z_1,\ldots,z_a$ be~distinct Lyndon words for some $a\ge0$.
	Suppose that $|z_j|\le m$ and $\operatorname{CSub}_m(z_j)\subseteq\operatorname{Sub}_m(u)$ for every~$j$.
	Then
	\[a\le|\!\operatorname{Sub}_m(u)|-|\!\operatorname{Sub}_{m-1}(u)|+1.\]
\end{lemma}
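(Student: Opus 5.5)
The plan is to interpret the bound as the cyclomatic number of a Rauzy (de~Bruijn-type) graph of~$u$, and to show that the Lyndon words $z_1,\dots,z_a$ give linearly independent cycles in it.

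\textbf{The graph.} Let $G$ be the directed multigraph-free graph with vertex set $\operatorname{Sub}_{m-1}(u)$ and edge set $\operatorname{Sub}_m(u)$. Each $x\in\operatorname{Sub}_m(u)$ is an edge from its length-$(m-1)$ prefix to its length-$(m-1)$ suffix. For $m=1$ there is a single vertex, the empty string, and all edges are loops. Reading $u$ left to right traces a walk that visits every vertex and every edge. Hence $G$ is weakly connected, and its cycle space over~$\mathbb{Q}$ has dimension
\[|\!\operatorname{Sub}_m(u)|-|\!\operatorname{Sub}_{m-1}(u)|+1.\]
It therefore suffices to exhibit $a$ linearly independent cycle vectors.

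\textbf{The cycle of each word.} For each $z=z_j$, let $W_z$ be the closed walk of length~$|z|$ whose $i$-th edge is the length-$m$ window of~$z^\infty$ starting at position~$i$, for $i=0,\dots,|z|-1$. Consecutive windows overlap in $m-1$ symbols, and the window at position~$|z|$ equals the window at position~$0$, so $W_z$ is indeed closed. By the hypothesis $\operatorname{CSub}_m(z)\subseteq\operatorname{Sub}_m(u)$, all these windows are edges of~$G$.

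Since $m\ge|z|$, each window contains a full rotation of~$z$. Distinct shifts give distinct rotations because $z$ is primitive, so $W_z$ uses each of its edges exactly once. Its indicator vector is a nonzero element of the cycle space.

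\textbf{Independence.} I split the words into long ones ($|z|=m$) and short ones ($|z|<m$).

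For a long~$z$, consider the edge $e_z=z$, the window at position~$0$. I claim $e_z$ lies on no other~$W_{z'}$.
\begin{itemize}
\item If $|z'|=m$, the edges of~$W_{z'}$ are exactly the rotations of~$z'$. Since $z$ and $z'$ are distinct Lyndon words, they are non-equivalent, so $z$ is not among them.
\item If $|z'|<m$, every edge of~$W_{z'}$ has period $|z'|<m=|z|$. But a Lyndon word has no proper period, as noted before \Cref{lem:cyclic-substrings}.
\end{itemize}

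For short words $z\ne z'$, the walks $W_z$ and $W_{z'}$ are edge-disjoint. Indeed, an edge of~$W_z$ has length $m>|z|$ and period~$|z|$. Its first $|z|$ symbols form a rotation of~$z$, so the edge determines the equivalence class, and hence the Lyndon word.

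Now take a vanishing rational combination $\sum_z c_z\mathbf 1_{W_z}=0$.
\begin{enumerate}
\item Looking at the private edge~$e_z$ of each long~$z$ forces $c_z=0$ for all long~$z$.
\item The remaining short walks are pairwise edge-disjoint and each is nonempty. Looking at any edge of a short~$W_z$ then forces $c_z=0$ for all short~$z$.
\end{enumerate}
Thus the $a$ vectors are independent, which gives the claimed bound.

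\textbf{Main obstacle.} The delicate step is independence. Edges can genuinely be shared between a short and a long word; for example, with $m=3$ the window \texttt{aba} lies on both $W_{\texttt{ab}}$ and $W_{\texttt{aab}}$. A naive edge-disjointness argument therefore fails. The triangular argument above handles this by giving each long word a private edge, using the no-proper-period property of Lyndon words. Before relying on the edge-disjointness of the short walks, I would double-check that a length-$m$ window of~$z^\infty$ indeed pins down the rotation class of a short~$z$.
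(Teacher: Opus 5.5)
Your setup is sound: the directed Rauzy graph is weakly connected, and its cycle space has dimension $|\!\operatorname{Sub}_m(u)|-|\!\operatorname{Sub}_{m-1}(u)|+1$. Each $W_{z_j}$ traverses its $|z_j|$ edges once each, and the private-edge argument for words with $|z|=m$ is correct. The gap is in step~2: walks of two distinct \emph{short} words need not be edge-disjoint. This is exactly the point you flagged for double-checking.

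Take $m=5$, $z=\texttt{aab}$ and $z'=\texttt{aaab}$, both Lyndon words of length less than~$5$. The string $\texttt{aabaa}$ is the length-$5$ window of $z^\infty$ at offset~$0$ and of $(z')^\infty$ at offset~$1$, so it is an edge of both $W_z$ and $W_{z'}$. Your justification (``its first $|z|$ symbols form a rotation of $z$'') assumes the edge determines $|z|$. But $\texttt{aabaa}$ has both periods $3$ and $4$, and Fine--Wilf excludes this only for windows of length at least $3+4-1=6$. So once the long words are eliminated, you are left with short walks that genuinely overlap, and nothing in your argument shows their indicator vectors are independent.

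The missing idea is a single lexicographic pivot order in place of the long/short split. Let $b_j$ be the length-$m$ prefix of $(z_j)^\infty$.
\begin{itemize}
\item The edges of $W_{z_j}$ begin with the $|z_j|$ distinct rotations of $z_j$, and $z_j$ is strictly the smallest of them. Hence $b_j$ is the strict lexicographic minimum among the edges of $W_{z_j}$.
\item The $b_j$ are pairwise distinct. Equal lengths force $z_i=z_j$, while unequal lengths would give the longer Lyndon word a proper period.
\end{itemize}
Now take a vanishing combination $\sum_j c_j\mathbf 1_{W_{z_j}}=0$, and choose $j_0$ with $c_{j_0}\neq0$ and $b_{j_0}$ smallest. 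Any other $W_{z_k}$ through $b_{j_0}$ has $b_k<b_{j_0}$ and hence $c_k=0$. So the $b_{j_0}$-coordinate of the sum is $c_{j_0}\neq0$, a contradiction.

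In the example above this order is visible: the pivot of $z'$ is $\texttt{aaaba}$, smaller than the pivot $\texttt{aabaa}$ of $z$, and $W_z$ avoids $\texttt{aaaba}$.

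With this repair, your proof is the paper's argument in linear-algebra language. The paper deletes $b_1<\cdots<b_a$ in this order from the undirected graph $G_m$. At each step the rest of the walk consists of strictly larger, still-present edges, so connectivity survives. This gives $|\!\operatorname{Sub}_m(u)|-a\ge|\!\operatorname{Sub}_{m-1}(u)|-1$.
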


\begin{graybox}
	Consider the difference between $\operatorname{Sub}_{m}$ and $\operatorname{Sub}_{m - 1}$.
	Every distinct length-$(m-1)$ substring of~$u$, except possibly the one occurring only at~the end of~$u$, can~be extended to~the right by~one symbol.
	Therefore
	\[\!|\operatorname{Sub}_m(u)|\ge|\operatorname{Sub}_{m-1}(u)|-1.\]
	\Cref{lem:substring-budget} shows that each Lyndon word whose complete set of~length-$m$ periodic substrings occurs in~$u$ forces one additional length-$m$ substring.
	To~give an~example, let $u=\texttt{aababbabaab}$ and~$m=4$.
	Then
	\begin{align*}
		\operatorname{Sub}_3(u)&=\{\texttt{aab},\texttt{aba},\texttt{abb},\texttt{baa},\texttt{bab},\texttt{bba}\},\\
		\operatorname{Sub}_4(u)&=\{\texttt{aaba},\texttt{abaa},\texttt{abab},\texttt{abba}, \texttt{baab},\texttt{baba},\texttt{babb},\texttt{bbab}\}.
	\end{align*}
	By~the extension argument above, $|\operatorname{Sub}_4(u)| \ge |\operatorname{Sub}_3(u)|-1=5$.

	However, consider the Lyndon words~$z_1=\texttt{aab}$, $z_2=\texttt{ab}$, and~$z_3=\texttt{abb}$.
	Their length-$4$ cyclic substrings are
	\begin{align*}
		\operatorname{CSub}_4(z_1)&=\{\texttt{aaba},\texttt{abaa},\texttt{baab}\},\\
		\operatorname{CSub}_4(z_2)&=\{\texttt{abab},\texttt{baba}\},\\
		\operatorname{CSub}_4(z_3)&=\{\texttt{abba},\texttt{babb},\texttt{bbab}\}.
	\end{align*}
	They are contained in~$\operatorname{Sub}_4(u)$, so~\Cref{lem:substring-budget} implies that these three Lyndon words account for three additional length-$4$ substrings, which shows that
	\[
	|\operatorname{Sub}_{4}(u)| \ge a + |\operatorname{Sub}_3(u)| - 1 = 8.
	\] 
\end{graybox}

\begin{proof}
	Let~$G_m$ be~the following undirected multigraph: 
	its vertices are~the strings in~$\operatorname{Sub}_{m-1}(u)$, and each string $\operatorname{Sub}_m(u)$ is~an~edge joining its length-$(m-1)$ prefix and its length-$(m-1)$ suffix (it~is an~undirected version
	of~the order-$(m-1)$ Rauzy graph~\cite{Rauzy82} of~$u$, which~is in~turn a~subgraph
	of~the de~Bruijn graph~\cite{DeBrujin46}).
	Thus, $G_m$ has $|\!\operatorname{Sub}_{m-1}(u)|$ vertices and $|\!\operatorname{Sub}_m(u)|$ edges.
	It~is~connected because the consecutive length-$(m-1)$ substrings of~$u$ form a~walk that visits every vertex.
	
	For every~$j$, let~$b_j$ be~the length-$m$ prefix of~$z_j^\infty$.
	Since $m\ge|z_j|$ and~$z_j$ is~a~Lyndon word, $b_j$ is~the unique lexicographically smallest string in~$\operatorname{CSub}_m(z_j)$.
	The strings~$b_1,\ldots,b_a$ are pairwise distinct.
	Indeed, suppose $b_i=b_j$.
	If~$|z_i|=|z_j|$, since $m\ge |z_i|$, the length-$|z_i|$ prefixes of~$b_i$ and~$b_j$ are~$z_i$ and~$z_j$, respectively, so~$z_i=z_j$.
	If, say, $|z_i|<|z_j|$, then $z_j$ is~a~prefix of~$z_i^\infty$ and hence has period $|z_i|<|z_j|$, which is~impossible for a~Lyndon word.
	After relabeling, assume $b_1<b_2<\cdots<b_a$ and delete the edges~$b_1,b_2,\ldots,b_a$ from~$G_m$ in~that order.
	We~show that after each deletion the graph remains connected.

	Consider the deletion of~the edge~$b_j$, and, for $0\le i<|z_j|$, let~$c_i$ and~$u_i$ be~the length-$m$ and length-$(m-1)$ substrings of~$z_j^\infty$ beginning at~offset~$i$, respectively.
	Then $c_0=b_j$, and each~$c_i$ is~an~edge of~$G_m$ connecting $u_i$ and~$u_{i+1}$.
	If~$|z_j|>1$, deleting the edge~$c_0 = b_j$ keeps the graph connected, since the alternate walk $c_1,c_2,\ldots,c_{|z_j|-1}$ is~still present: each $c_i$ with $1\le i<|z_j|$ is~larger than~$b_j$, whereas every previously deleted edge is~smaller than~$b_j$.
	If~$|z_j|=1$, then $b_j$ is~a~loop, so~its deletion is~harmless.

	\begin{graybox}
		For the same string $u=\texttt{aababbabaab}$ and~$m=4$, the graph~$G_4$ is~shown below.

		\begin{center}
			\begin{tikzpicture}
				\foreach \s/\x/\y in {aab/-32/10, aba/-13/0, baa/-32/-10, bab/13/0, abb/32/10, bba/32/-10}
					\node[v] (\s) at (\x mm,\y mm) {\strut \tt \s};

				\draw[dashed] (aab) -- node[l,pos=.5,above] {\scriptsize $b_1=\mathtt{aaba}$} (aba);
				\draw (aba) -- node[l,pos=.5,below] {\scriptsize $\mathtt{abaa}$} (baa);
				\draw (baa) -- node[l,pos=.5,above] {\scriptsize $\mathtt{baab}$} (aab);
				\draw[dashed] (aba) to[bend left=28] node[l,pos=.5,above] {\scriptsize $b_2=\mathtt{abab}$} (bab);
				\draw (aba) to[bend right=28] node[l,pos=.5,below] {\scriptsize $\mathtt{baba}$} (bab);
				\draw (bab) -- node[l,pos=.5,above] {\scriptsize $\mathtt{babb}$} (abb);
				\draw[dashed] (abb) -- node[l,pos=.5,above] {\scriptsize $b_3=\mathtt{abba}$} (bba);
				\draw (bba) -- node[l,pos=.5,below] {\scriptsize $\mathtt{bbab}$} (bab);
			\end{tikzpicture}
		\end{center}
	\end{graybox}

	After all $a$ deletions, the remaining graph is~connected and has $|\!\operatorname{Sub}_m(u)|-a$ edges; consequently, $|\!\operatorname{Sub}_m(u)|-a\ge|\!\operatorname{Sub}_{m-1}(u)|-1$.
\end{proof}

\begin{proof}[Proof of~\Cref{thm:periodic-budget}]
	Enumerate the cycles~$C\in \mathcal{C}$ satisfying $o(C)+1\ge w(C)$ as~$C_1,\ldots,C_h$.
	For every~index $j$, put $z_j=\tau'(C_j)$ and $\ell_j=o(C_j)+1$.
	The Lyndon words~$z_1,\ldots,z_h$ are distinct by~\Cref{lem:primitive-cycle-words}, and $|z_j|=|\tau(C_j)|=w(C_j)\le\ell_j$ for every~$j$.
	Moreover, \Cref{lem:cyclic-substrings} gives $\operatorname{CSub}_{\ell_j}(z_j)\subseteq\operatorname{Sub}_{\ell_j}(\operatorname{OPT})$ for every~$j$.
	For $1\le m\le \operatorname{OPT}$, let~$a_m$ be~the number of~indices~$j$ satisfying $|z_j|\le m\le\ell_j$.
	\Cref{lem:substring-budget} gives
	\[a_m\le|\!\operatorname{Sub}_m(\operatorname{OPT})|-|\!\operatorname{Sub}_{m-1}(\operatorname{OPT})|+1.\]
	Each index~$j$ is~counted by~$a_m$ for the values $m=|z_j|,\ldots,\ell_j$, hence exactly $\ell_j-|z_j|+1=o(C_j)-w(C_j)+2$ times.
	Since $|\!\operatorname{Sub}_0(\operatorname{OPT})|=|\!\operatorname{Sub}_{\operatorname{OPT}}(\operatorname{OPT})|=1$, summing the inequalities over~$m$ gives 
	\begin{align*}
		\sum_{\substack{C\in \mathcal{C}\\o(C)+1\ge w(C)}}\bigl(o(C)-w(C)+2\bigr)&=\sum_{j=1}^{h}\bigl(\ell_j-|z_j|+1\bigr)\\
		&=\sum_{m=1}^{\operatorname{OPT}}a_m\\
		&\le\sum_{m=1}^{\operatorname{OPT}}\bigl(|\!\operatorname{Sub}_m(\operatorname{OPT})|-|\!\operatorname{Sub}_{m-1}(\operatorname{OPT})|+1\bigr)\\
		&=\operatorname{OPT}.
	\end{align*}
	For each cycle in~the first sum, $o(C)-w(C)\le o(C)-w(C)+2$, whereas every omitted cycle satisfies $o(C)-w(C)<0$.
	Therefore, 
	\[
	O - W \le \operatorname{OPT}. \qedhere
	\] 
\end{proof}

\subsection{Lower Bound}

In~this section, we~prove that the coefficient of~$W$ in~\Cref{thm:periodic-budget} cannot be~decreased further.

\begin{theorem}\label{prop:black-box-tightness}
	For every $\gamma<1$, 
	there is~a~substring-free set~$\mathcal{S}$ of~binary strings 
	such that
	\[O>\operatorname{OPT}+\gamma W.\]
\end{theorem}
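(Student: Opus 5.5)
The plan is to build an instance in which every cycle of the optimal cycle cover is a self-loop on a highly periodic string, and in which the cycles can be chained in $\operatorname{OPT}$ so efficiently that $\operatorname{OPT}$ is only slightly larger than $O-W$. This is essentially the equality case of the proof of \Cref{thm:periodic-budget}. Fix a large integer $k$ and a length $N\ge 4k$. For every binary Lyndon word $z$ with $|z|\le k$, let $s_z$ be a string of length $N$ with period $|z|$, i.e.\ a length-$N$ substring of $z^\infty$; the rotation will be fixed later. Let $\mathcal{S}$ consist of these strings.

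\emph{Structure of the instance.}
\begin{itemize}
\item \emph{Substring-freeness.} Suppose $s_z$ occurred inside $s_{z'}$ with $z\ne z'$. Then a string of length $N\ge|z|+|z'|$ would have periods $|z|$ and $|z'|$. By Fine--Wilf it would have period $\gcd(|z|,|z'|)$, which contradicts the primitivity and non-equivalence of the two Lyndon words.
\item \emph{Overlaps between distinct strings.} For the same reason, $|\operatorname{ov}(s_z,s_{z'})|\le |z|+|z'|-2<2k$ when $z\ne z'$.
\item \emph{Self-loops.} The self-overlap of $s_z$ is $N-|z|\ge 3k$.
\item \emph{The optimal cycle cover.} Greedy with cycles allowed processes all self-loops first, so the optimal cycle cover $\mathcal{C}$ consists of the self-loops. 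Each has $w=|z|$ and $o=N-|z|$.
\end{itemize}
Hence
\[
W=\sum_z |z|\approx 2^{k+1},\qquad O-W=\sum_z\bigl(N-2|z|\bigr).
\]
The number of Lyndon words is $\Lambda\approx 2^{k+1}/k$.

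\emph{Upper bound on $\operatorname{OPT}$ (main obstacle).} It suffices to exhibit an ordering $z_1,z_2,\dots$ of the Lyndon words and rotations of the strings $s_{z_i}$ such that consecutive strings overlap by at least $|z_i|+|z_{i+1}|-c$ for an absolute constant $c$. Then
\[
\operatorname{OPT}\le \sum_i N-\sum_i\bigl(|z_i|+|z_{i+1}|-c\bigr)\le \sum_z (N-2|z|)+O(k)+c\Lambda,
\]
so $\operatorname{OPT}\le O-W+O(\Lambda)$. Consequently
\[
O-\operatorname{OPT}-\gamma W\ \ge\ (1-\gamma)W-O(\Lambda).
\]
Since $\Lambda/W=\Theta(1/k)\to0$, this is positive for $k$ large enough, which proves the theorem.

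To get the near-maximal overlaps, I would try the Fredricksen--Kessler--Maiorana order first. In that order, consecutive Lyndon words $z,z'$ share a long common prefix, and the transition $z^\infty\to z'^\infty$ inside the de Bruijn sequence should allow $s_z$ to end and $s_{z'}$ to begin on a common block of length about $|z|+|z'|$. The exact rotations would be read off from the de Bruijn sequence itself.

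If that bookkeeping fails, the fallback is to restrict to a subfamily, for instance Lyndon words of the form $0^a1x$ or of a single length $k$. The subfamily must still satisfy two requirements:
\begin{itemize}
\item the number of words is $o(W)$, as above;
\item some explicit ordering achieves overlaps of $|z|+|z'|-O(1)$.
\end{itemize}
These are the only two properties the argument uses. A family that is too sparse fails the second requirement: for $z=0^i1$ the natural overlaps are only about $|z|$, which leaves an excess of order $W$. So the family must be rich enough that consecutive periodic blocks can be glued along nearly two periods.
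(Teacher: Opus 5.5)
\textbf{There is a genuine gap.} Your reduction is correct and matches the skeleton of the paper's proof. For your equal-length family, $O-\operatorname{OPT}=\max_\pi\sum_i\operatorname{ov}(s_{\pi(i)},s_{\pi(i+1)})-W$. So everything hinges on finding a family with $o(W)$ words that admits an order in which consecutive overlaps are $|z_i|+|z_{i+1}|-O(1)$. That existence claim is the entire content of the theorem, and you never exhibit such a family. Both families you propose provably fail.

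\emph{Single length $k$.} Two distinct Lyndon words cannot share a factor of length $k$, since it would be a rotation of both. So every cross overlap is at most $k-1$, and $O-\operatorname{OPT}\le(\Lambda-1)(k-1)-\Lambda k<0$.

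\emph{All Lyndon words of length at most $k$.} The obstruction is Fine--Wilf rigidity. An overlap of length $|z|+|z'|-L$ has periods $|z|$ and $|z'|$. A binary string with these two periods and this length is determined by $O(L)$ bits; this is the counting behind Fine--Wilf. So only $k\cdot 2^{O(L)}$ words of length $p$ have any partner with loss at most $L$, against roughly $2^p/p$ Lyndon words of length $p$. Hence almost every word of length close to $k$ loses $\Omega(k)$ at both of its transitions. The total loss is then $\Omega(W)$, and $O-\operatorname{OPT}\le(1-\Omega(1))W$, so nothing near $\gamma=1$ comes out. The FKM property you invoke (consecutive Lyndon words share prefixes) concerns length-$n$ windows of a de~Bruijn sequence that straddle two words. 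It says nothing about long common factors of $z^\infty$ and $z'^\infty$.

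\textbf{The family you explicitly discard is the one that works.} It is exactly the paper's construction: $z_p=\mathtt{0}^{p-1}\mathtt{1}$ for $p=2,\dots,m$, with $s_p=\mathtt{0}^{p-1}\mathtt{1}\mathtt{0}^{p-1}\mathtt{1}\mathtt{0}^{p-2}$. This is a window of $z_p^\infty$ that starts at a block of zeros and stops one zero short of completing a block.
\begin{itemize}
\item $s_p$ ends with, and $s_{p-1}$ begins with, $\mathtt{0}^{p-2}\mathtt{1}\mathtt{0}^{p-2}$. So $|\operatorname{ov}(s_p,s_{p-1})|=2p-3=|z_p|+|z_{p-1}|-2$, which is the Fine--Wilf maximum.
\item There are only $m-1=o(W)$ strings, because $W=\Theta(m^2)$.
\item This gives $\operatorname{OPT}\le W+m$ and $O=2W-2(m-1)$, hence $O-\operatorname{OPT}\ge W-3m+2$.
\end{itemize}
Your remark that these words overlap only by about $|z|$ comes from choosing the wrong rotations. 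If you insist on a common length $N$, getting these exact overlaps additionally requires $N\equiv-2\pmod p$ for every intermediate $p$. The paper avoids this by taking $|s_p|=3p-2$.

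The moral is the opposite of your heuristic. Sparsity is harmless. Near-maximal two-period overlaps force very rigid structure, so you need a thin, structured family with consecutive periods, not a rich one.
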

\begin{proof}
	For $p\ge2$, put $s_p={\tt 0}^{p-1}{\tt 1}{\tt 0}^{p-1}{\tt 1}{\tt 0}^{p-2}$.
	\begin{graybox}
		\centering
		\(\begin{aligned}
			s_2 &= \texttt{0101}\\
			s_3 &= \texttt{0010010}\\
			s_4 &= \texttt{0001000100}\\
			s_5 &= \texttt{0000100001000}
		\end{aligned}\)
	\end{graybox}
	Clearly, 
	$s_p$ starts and ends with ${\tt 0}^{p-1}{\tt 1}{\tt 0}^{p-2}$, so $|\!\operatorname{ov}(s_p, s_p)| \ge 2p-2$. Also, for $p \neq q$,
	$s_p$ is not a~substring of~$s_q$ 
	($s_p$ has exactly two $\tt 1$'s, whose positions differ by~$p$)
	and
	$|\!\operatorname{ov}(s_p, s_q)| \le 2p - 2$ (a~longer overlap would cover both $\tt 1$'s in~$s_p$).
	
	For $m\ge2$, let $\mathcal{S}_m=\{s_2,s_3,\dotsc,s_m\}$.
	By~the discussion above, $\mathcal{S}_m$ is~substring-free
	and its cycle cover consisting of~all self-loops has~minimum total prefix.
	Then,
	\begin{align*}
		W&=\sum_{p=2}^{m} (3 p - 2 - (2p - 2)) =\sum_{p = 2}^{m} p = m (m + 1) / 2 - 1, \\
		O&=\sum_{p=2}^{m}(2p-2)=2W-2(m-1).
	\end{align*}
	Since $|\!\operatorname{ov}(s_p,s_{p-1})|=2p-3$,
	\[
	\operatorname{OPT} \le
	|\langle s_m,s_{m-1},\dotsc,s_2\rangle| = |s_2|+\sum_{p=3}^{m}\bigl(|s_p|-(2p-3)\bigr)=4+\sum_{p=3}^{m}(p+1)=W+m.\]
	Consequently,
	\[O-\operatorname{OPT}\ge2W-2(m-1)-(W+m)=W-3m+2.\]
	Since $W=\Theta(m^2)$, for every fixed $\gamma<1$ and all sufficiently large~$m$, $O-\operatorname{OPT}>\gamma W$.
\end{proof}

\section*{AI disclosure}
We~used ChatGPT 5.6-Sol Pro to~explore different approaches to~prove the greedy conjecture.
While we~were trying to~formulate a~self-reduction for SCS, the agent suggested a~proof of~the inequality in~\Cref{thm:periodic-budget}, it~was not directly related to~the discussed question, but the agent had found a~simpler route to~the statement we~tried to prove.
The initial proof was complicated, but working together we~simplified it~significantly.
Then we~used the agent to~do mechanical work to~construct the example in~\Cref{prop:black-box-tightness}.
We~also used Codex 5.6-Sol Ultra to~find all typos in~the paper.
We~have written all~the arguments by~ourselves and assume full responsibility for~the manuscript.

\bibliographystyle{alpha}
\bibliography{refs}
\end{document}